\newtheorem{proposition}{Proposition}
\theoremstyle{definition}
\newtheorem{definition}{Definition}
\newtheorem*{notation}{Notation}
\newcommand{\interr}{\xrightarrow{+}}
\newcommand{\inter}{\xrightarrow{\bowtie}}
\newcommand{\indir}{\xrightarrow{:=}}
\newcommand{\arity}{\operatorname{ar}}
\title{Upward confluence in the interaction calculus}
\author{Anton Salikhmetov}
\begin{document}
\maketitle

\begin{abstract}
The lambda calculus is not upward confluent, one of counterexamples known thanks to Plotkin.
This paper investigates upward confluence in the interaction calculus.
Can an interaction system have this property?
We positively answer this question and also provide a necessary and sufficient condition for stronger one-step upward confluence which happens to be very restrictive.
However, the provided condition is not necessary for upward confluence as we prove that the interaction system of the linear lambda calculus is upward confluent.
\end{abstract}

\section{Introduction}

In the context of rewriting systems, upward confluence also known as upside down CR and upward Church-Rosser is the following property.

\begin{definition}
\label{rsys}
A rewriting system $R$ is \textit{upward confluent} if and only if
$$
\forall s_1, s_2, s: s_1 \rightarrow^*_R s\ \wedge\ s_2 \rightarrow^*_R s\ \Rightarrow\ \exists s': s' \rightarrow^*_R s_1\ \wedge\ s' \rightarrow^*_R s_2.
$$
\end{definition}

The $\lambda$-calculus has the following counterexample \cite[Exercise 3.5.11 (vii)]{barendregt} thanks to Plotkin: $\lambda$-terms $(\lambda x.b\ x\ (b\ c))\ c$ and $(\lambda x.x\ x)\ (b\ c)$ both $\beta$-reduce to $(b\ c)\ (b\ c)$ without any common $\beta$-expand; see also \cite{vanoostrom} where this counterexample and another one are discussed.

It is also useful to distinguish a one-step version of upward confluence.

\begin{definition}
\label{srsys}
A rewriting system $R$ is \textit{strongly upward confluent} if and only if
$$
\forall s_1, s_2, s: s_1 \rightarrow_R s\ \wedge\ s_2 \rightarrow_R s\ \Rightarrow\ \exists s': s' \rightarrow_R s_1\ \wedge\ s' \rightarrow_R s_2.
$$
\end{definition}

Definition~\ref{srsys} demands what can also be called the \textit{upward diamond property}.
Note that the upward diamond property implies upward confluence, but not vice versa.

\section{Irrelevance of indirection}

Let us investigate how Definition~\ref{rsys} and Definition~\ref{srsys} apply to the interaction calculus~\cite{calculus}.

\begin{notation}
Let $c \rightarrow c'$, where $c$ and $c'$ are configurations.
We write $c \inter c'$ when $c \rightarrow c'$ is interaction, and $c \indir c'$ when $c \rightarrow c'$ is indirection, where indirection will mean any substitution, including that in the interface of a configuration.
\end{notation}

\begin{proposition}
\label{indupdiamond}
$\forall c_1, c_2, c: c_1 \indir c\ \wedge\ c_2 \indir c\ \Rightarrow\ \exists c' : c' \indir c_1\ \wedge\ c' \indir c_2$.
\end{proposition}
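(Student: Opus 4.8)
The plan is to classify the one-step indirection predecessors of a fixed configuration by position and then to close the diamond by a case analysis on how two such predecessors overlap. First I would make the reverse of an indirection explicit. A step $d \indir e$ fires a single equation $x = u$ whose left-hand side is a variable: it deletes that equation and replaces the unique remaining occurrence of $x$ (whether in the interface or in another equation) by $u$. Read backwards, this says that every $d$ with $d \indir e$ arises from $e$ by an \emph{abstraction}: choose a subterm occurrence at some position $p$ of $e$, replace it by a fresh variable $x$, and adjoin the equation $x = (e|_p)$. I would check that abstraction always produces a legal configuration, since $x$ then occurs exactly twice and every other occurrence count is unchanged; hence the one-step predecessors of $c$ are classified, up to renaming of the fresh variable, by the positions of $c$.

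Writing $p$ (with $s = c|_p$ and fresh variable $x$) for the position whose abstraction yields $c_1$, and $q$ (with $t = c|_q$ and fresh $y$) for the one yielding $c_2$, I would then split on the relative position of $p$ and $q$. If $p$ and $q$ are \emph{disjoint}, I abstract $c$ simultaneously at both positions to obtain $c'$; the two abstractions touch disjoint parts, so firing $y = t$ in $c'$ returns $c_1$ and firing $x = s$ returns $c_2$. If $p = q$, then $c_1$ and $c_2$ coincide up to renaming of the fresh variable, and a common predecessor is obtained by one further abstraction (for instance abstracting the right-hand side $s$ of the new equation, whose contraction restores $c_1 = c_2$). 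The remaining case is \emph{nesting}, say $q$ strictly inside $s$: let $\hat s$ be $s$ with its subterm at the corresponding inner position replaced by $y$, and take $c'$ to be $c$ abstracted at $p$ with the equation $x = \hat s$, together with the equation $y = t$. Then firing $x = \hat s$ yields $c_2$, while firing $y = t$ substitutes $t$ back into $\hat s$, reconstituting $s$ and hence yielding $c_1$.

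The disjoint and equal cases are routine bookkeeping. The hard part will be the nested case: I must place the inner abstraction $y = t$ inside the right-hand side of the outer equation and then verify, by tracking occurrences, that firing the outer equation rebuilds $c_2$ exactly and that firing the inner one rebuilds $s$ and hence $c_1$ exactly. The points demanding care are that $c'$ remains a legal configuration, with each of $x$ and $y$ occurring exactly twice, and that no capture arises from $\hat s$ mentioning $y$. Once these checks go through, the common predecessor $c'$ exists in every case and the upward diamond for $\indir$ follows.
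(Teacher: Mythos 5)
Your proposal is correct and takes essentially the same route as the paper: you characterize one-step $\indir$-predecessors as abstractions at positions of $c$ and close the diamond by cases on whether the two positions are disjoint, nested, or identical, which matches the paper's cases 1--2 (disjoint occurrences, in different terms or the same term), case 3 (nested occurrences), and its separate treatment of $c_1 = c_2$ via an extra indirection. The common predecessors you construct coincide with the paper's choices of $c'$ in each case.
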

\begin{proof}
If $c_1 \neq c_2$, then indirections $c_1 \indir c$ and $c_2 \indir c$ remove two different names $x$ and $y$, respectively, and we have one of the following three possible cases:
\begin{enumerate}
\item $c = \langle \dots v[x := t] \dots w[y := u]\dots\ \Delta \rangle$, $x$ occurs in $v$, and $y$ occurs in $w$. \\
Then $c_1 = \langle \dots v \dots w[y := u]\dots\ x = t,\ \Delta \rangle$ and \\
\phantom{Then} $c_2 = \langle \dots v[x := t] \dots w \dots\ y = u,\ \Delta \rangle$. \\
Choose $c' = \langle \dots v \dots w \dots\ x = t,\ y = u,\ \Delta \rangle$.

\item $c = \langle \dots v[x := t][y := u] \dots\ \Delta \rangle$, and both $x$ and $y$ occur in $v$. \\
Notice $v[x := t][y := u] \equiv v[y := u][x := t]$. \\
Then $c_1 = \langle \dots v[y := u] \dots\ x = t,\ \Delta \rangle$ and \\
\phantom{Then} $c_2 = \langle \dots v[x:=t]\dots\ y = u,\ \Delta \rangle$. \\
Choose $c' = \langle \dots v \dots\ x = t,\ y = u,\ \Delta \rangle$.

\item $c = \langle \dots v[x := t[y := u]] \dots\ \Delta \rangle$, $x$ occurs in $v$, and $y$ occurs in $t$. \\
Notice $v[x := t[y := u]] \equiv v[x := t][y := u]$. \\
Then $c_1 = \langle \dots v \dots\ x = t[y := u],\ \Delta \rangle$ and \\
\phantom{Then} $c_2 = \langle \dots v[x:=t] \dots\ y = u,\ \Delta \rangle$. \\
Choose $c' = \langle \dots v \dots\ x = t,\ y = u,\ \Delta \rangle$.
\end{enumerate}
If $c_1 = c_2 = \langle \vec f\ |\ x = t,\ \Delta \rangle$, choose $c' = \langle \vec f\ |\ x = y,\ y = t,\ \Delta \rangle$ with a fresh name $y$.
\end{proof}

\begin{notation}
If $\alpha[v_1,\dots,v_m] \bowtie \beta[w_1,\dots,w_n]$, then $\alpha(t_1,\dots,t_m) \div \beta(u_1,\dots,u_n)$ denotes the following multiset of equations: $\{t_1=v_1,\dots,t_m=v_m,\ u_1=w_1,\dots,u_n=w_n\}$.
\end{notation}

\begin{proposition}
\label{indintcom}
$\forall c_1, c_2, c: c_1 \inter c\ \wedge\ c_2 \indir c\ \Rightarrow\ \exists c' : c' \indir c_1\ \wedge\ c' \inter c_2$.
\end{proposition}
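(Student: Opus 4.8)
The plan is to exhibit $c'$ explicitly by ``undoing'' the indirection while keeping the active pair intact, and the whole argument turns on one structural invariant of the calculus: an interaction step only ever introduces fresh names, so $\mathrm{Var}(c_1) \subseteq \mathrm{Var}(c)$, whereas the indirection $c_2 \indir c$ eliminates a name $x$, so $x \notin \mathrm{Var}(c)$. Combining these, $x \notin \mathrm{Var}(c_1)$; in particular the active pair $\alpha(\vec t) = \beta(\vec u)$ that $c_1$ reduces does not mention $x$. I would record this first, since it rules out any conflict between the two redexes at the level of the active-pair agents.

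Next I would set up the two steps concretely. Write $c_1 = \langle \vec f \mid \alpha(\vec t) = \beta(\vec u),\ \Delta \rangle$, so that $c = \langle \vec f \mid \alpha(\vec t) \div \beta(\vec u),\ \Delta \rangle$, and let $x = s$ be the equation consumed by $c_2 \indir c$. Since each name occurs exactly twice in a configuration, $c_2$ contains $x = s$ together with exactly one further, \emph{floating}, occurrence of $x$, which the substitution turns into the displayed $s$ inside $c$. The construction of $c'$ is then forced: take $c_2$, which already carries the decomposed equations $\alpha(\vec t) \div \beta(\vec u)$ (up to the one un-substituted occurrence of $x$) together with $x = s$, and recompose those decomposed equations into the active pair again. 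Equivalently, $c'$ is $c_1$ with the indirection reinstated, that is, with $x = s$ restored and the floating occurrence of $s$ written back as $x$.

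With $c'$ in hand the two verifications are short. Interacting the restored active pair in $c'$ re-performs exactly the decomposition recorded by $\div$, reproducing $c_2$, which gives $c' \inter c_2$; and applying the indirection $x = s$ in $c'$ substitutes $s$ for the single floating $x$ and deletes the equation, returning the active pair and the context of $c_1$, which gives $c' \indir c_1$. I would organise the verification by the location of the floating occurrence of $x$: either it lies in the interface or in $\Delta$, in which case the active pair of $c'$ is literally $\alpha(\vec t) = \beta(\vec u)$ and the two redexes are disjoint, or it lies inside one of the arguments $t_i$ or $u_j$, in which case the recomposed active pair carries the extra $x$ that the indirection later removes.

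The main obstacle to make airtight is precisely that the recomposition is always legitimate: I must be sure the floating $x$ cannot sit inside one of the rule-supplied terms $v_i, w_j$ produced by $\div$, for otherwise the context equations $t_i = v_i$ would not recompose to a well-formed active pair. This is exactly where the invariant reappears: the terms $v_i, w_j$ are instantiated from the fixed rule $\alpha[\vec v] \bowtie \beta[\vec w]$ and therefore contain only freshly created names, so the genuine, pre-existing name $x$ occurring in $c_2$ cannot appear among them. Hence the floating occurrence is confined to $\vec t$, $\vec u$, $\Delta$, or the interface, the recomposition is unambiguous, and both sides of the square close.
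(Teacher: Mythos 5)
Your construction of $c'$ and your two-case analysis (floating occurrence of $x$ in the interface or in $\Delta$, versus inside the arguments $\vec t,\vec u$ of the active pair) coincide exactly with the paper's proof. The difficulty is the third case, which you rightly single out as the main obstacle but then dismiss with an argument that does not hold up. The observation that the instantiated $v_i, w_j$ contain only freshly created names is a fact about the names occurring in $c$ --- and it is vacuous there, since the indirection removes $x$, so $x$ occurs nowhere in $c$ in any case. What you actually need is a statement about \emph{positions}: that the position holding $x$ in $c_2$ (and holding the substituted term $u'$ in $c$) cannot lie inside some $v_i$ or $w_j$. Freshness gives you nothing here, because $c_2$ is an arbitrary indirection-predecessor of $c$, and the substituted subterm $u'$ may perfectly well be a subterm of the rule instance. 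Concretely, take a system with the single rule $\alpha[\gamma(\varepsilon,\varepsilon)]\bowtie\beta[\,]$, where $\arity(\alpha)=1$ and $\arity(\beta)=\arity(\varepsilon)=0$. Then
$$
c_1 = \langle z\ |\ \alpha(z)=\beta\rangle \inter \langle z\ |\ z=\gamma(\varepsilon,\varepsilon)\rangle = c
\quad\text{and}\quad
c_2 = \langle z\ |\ z=\gamma(x,\varepsilon),\ x=\varepsilon\rangle \indir c,
$$
but $c_2$ contains no residue of any active pair, so no configuration $c'$ with $c'\inter c_2$ exists --- neither the one you construct nor any other. The recomposition does not merely become ambiguous in this case; it fails outright.

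To be fair, the paper's own proof lists only your first two cases and silently treats them as exhaustive, so you have faithfully reproduced its argument and, to your credit, noticed the hole it leaves. But the patch you offer conflates ``which names occur in $v_i$'' with ``where the un-substituted occurrence of $x$ may sit,'' and that is precisely where the argument --- yours and the paper's --- breaks.
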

\begin{proof}
Assume $c_1 \inter c'$ due to $\alpha[\vec v] \bowtie \beta[\vec w]$ and $c_2 \indir c'$ removes a name $x$.
Interaction cannot remove any name, hence $c_1 \neq c_2$.
Now we have the following two possible cases:
\begin{enumerate}
\item $c = \langle \dots t'[x:=u'] \dots\ \alpha(\vec t)\div\beta(\vec u),\ \Delta \rangle$, and $x$ occurs in $t'$. \\
Then $c_1 = \langle \dots t'[x:=u'] \dots\ \alpha(\vec t)=\beta(\vec u),\ \Delta \rangle$ and \\
\phantom{Then} $c_2 = \langle \dots t' \dots\ x = u',\ \alpha(\vec t)\div\beta(\vec u),\ \Delta \rangle$. \\
Choose $c' = \langle \dots t' \dots\ x = u',\ \alpha(\vec t)=\beta(\vec u),\ \Delta \rangle$.

\item $c = \langle \vec f\ |\ \alpha(\vec t\,[x:=t'])\div\beta(\vec u),\ \Delta \rangle$, and $x$ occurs in $\vec t$. \\
Then $c_1 = \langle \vec f\ |\ \alpha(\vec t\,[x:=t'])=\beta(\vec u),\ \Delta \rangle$ and \\
\phantom{Then} $c_2 = \langle \vec f\ |\ x = t',\ \alpha(\vec t)\div\beta(\vec u),\ \Delta \rangle$. \\
Choose $c' = \langle x = t',\ \alpha(t_1,\dots,t_m)=\beta(u_1,\dots,u_n),\ \Delta \rangle$.
\end{enumerate}
In each case, $c' \indir c_1$ and $c' \inter c_2$, so interaction and indirection commute upwards.
\end{proof}

Propositions \ref{indupdiamond} and \ref{indintcom} show that upward confluence is determined by the interaction rules and has little to do with indirections.
The latter makes sense because indirections in the interaction calculus are rather an artifact of textual representation of interaction nets that is irrelevant in the original graphical representation.
As an immediate consequence, interaction systems without any interaction rules always have strong upward confluence.

\begin{notation}
We write $c \interr c'$ when $c \inter c_1 \indir \cdots \indir c_n = c'$ for some $c_i$ and $n \ge 1$.
\end{notation}

\begin{proposition}
\label{plusupdiamond}
An interaction system is upward confluent if
\begin{equation}
\label{plusdiamond}
\forall c_1, c_2, c: c_1 \inter c\ \wedge\ c_2 \inter c\ \Rightarrow\ \exists c': c' \interr c_1\ \wedge\ c' \interr c_2.
\end{equation}
\end{proposition}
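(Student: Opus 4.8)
The plan is to reverse the reduction relation and establish ordinary confluence of the converse $\rightarrow^{-1}$. This is an equivalent goal: substituting $\rightarrow^{-1}$ for $\rightarrow$ throughout Definition~\ref{rsys} turns the two descending reductions to $s$ into an ordinary peak out of $s$ and turns the common ancestor into a common reduct, so that upward confluence of $\rightarrow$ is literally the Church--Rosser property of $\rightarrow^{-1}$. Since $\rightarrow\,=\,\inter\cup\indir$, the converse is the union of reverse interaction and reverse indirection, and I would treat these as two labelled kinds of step.

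Read for $\rightarrow^{-1}$, the material already proved supplies all the local joining diagrams. Proposition~\ref{indupdiamond} says that reverse indirection has the diamond property; Proposition~\ref{indintcom} says that a reverse-interaction step and a reverse-indirection step issuing from the same configuration close as a one-step commuting square, a single reverse indirection facing a single reverse interaction; and the hypothesis~(\ref{plusdiamond}) says that a peak of two reverse-interaction steps closes, on each side, by the reverse of $\interr$, namely a sequence of reverse indirections followed by exactly one reverse interaction. First I would record these three diagrams and observe that the ``mixed'' and ``pure indirection'' ones are genuine one-step squares, which by the usual commuting-diamond tiling already yield that reverse indirection is confluent and commutes with reverse interaction.

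To pass from the local diagrams to global confluence I would invoke the decreasing-diagrams criterion, ranking every reverse-interaction step strictly above every reverse-indirection step in the two-point well-founded order. Each local peak is then decreasing: the indirection/indirection peak and the mixed peaks close by single steps, while for an interaction/interaction peak the joining path on each side consists of indirection steps, all lying strictly below the peak, followed by the single interaction step permitted at the peak's own level. The decreasing-diagrams theorem then yields confluence of $\rightarrow^{-1}$, hence upward confluence of $\rightarrow$.

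The hard part will be the interaction/interaction peak, the only place where the hypothesis is actually used. Unlike the other two cases it is not a one-step diamond: the joining paths drag along the trailing indirections hidden inside $\interr$, so a naive diamond tiling need not terminate. The labelling is engineered precisely so that these indirection tails sit below the interactions and each side spends its single top-level interaction exactly once; making this bookkeeping rigorous --- equivalently, running a direct induction on the number of interaction steps in the two given reductions while absorbing the auxiliary indirections through Proposition~\ref{indupdiamond} and Proposition~\ref{indintcom} --- is where the real work lies.
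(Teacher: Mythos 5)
Your proof is correct, but it glues the local diagrams together by a different mechanism than the paper does. The paper never leaves the elementary setting: it bundles each interaction with its trailing indirections into the macro-step $\interr$, uses Proposition~\ref{indupdiamond} and Proposition~\ref{indintcom} together with hypothesis~(\ref{plusdiamond}) to show that $\interr$ (and hence $\indir\cup\interr$) satisfies the upward diamond property (this is the content of Figure~\ref{diamondfig}), and then concludes by the standard fact that a relation with the diamond property has a confluent reflexive--transitive closure, noting that $\rightarrow^*$ is the transitive closure of $\indir\cup\interr$. You instead keep the steps atomic, pass to the converse relation, and invoke the decreasing-diagrams theorem with the two-point label order placing interaction above indirection; the same three local diagrams (Proposition~\ref{indupdiamond}, Proposition~\ref{indintcom}, and the hypothesis, whose valley of reverse indirections followed by exactly one reverse interaction is precisely a decreasing diagram for an interaction/interaction peak) then do all the work. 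Your route buys a cleaner global argument --- there is no need to verify that $\indir$ commutes upward with the macro-step $\interr$ or to assemble the composite diagram of Figure~\ref{diamondfig} --- at the price of importing a nontrivial external theorem, whereas the paper's route is self-contained. One remark: your closing paragraph undersells your own argument. Once each local peak has been checked to close decreasingly, as you do in your third paragraph, the decreasing-diagrams theorem leaves no residual bookkeeping, so the further induction on the number of interaction steps that you worry about is not needed.
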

\begin{proof}
First, let us demonstrate that (\ref{plusdiamond}) implies the upward diamond property for $\interr$ by using Proposition~\ref{indupdiamond}, Proposition~\ref{indintcom}, and the diagram in Figure~\ref{diamondfig}.
\begin{figure}
$$
\begin{tikzcd}
& & c
\ar[rrdd, out=0, in=90, thick, dashed, "+"]
\ar[lldd, out=180, in=90, thick, dashed, "+"']
\ar[ld, dashed, "+"]
\ar[rd, dashed, "+"']
& & \\
& c_1
\ar[ld, two heads, dashed, ":="]
\ar[rd, dashed, "\bowtie"]
& & c_2
\ar[ld, dashed, "\bowtie"']
\ar[rd, dashed, two heads, ":="'] & \\
c'_1
\ar[rrdd, thick, "+"', out=-90, in=180]
\ar[rd, "\bowtie"]
& & c'
\ar[ld, dashed, two heads, ":="]
\ar[rd, dashed, two heads, ":="']
& & c'_2
\ar[ld, "\bowtie"']
\ar[lldd, out=-90, in=0, thick, "+"] \\
& c''_1 \ar[rd, two heads, ":="] &
& c''_2 \ar[ld, two heads, ":="'] & \\
& & c'' & &
\end{tikzcd}
$$
\caption{$\forall c'_1, c'_2, c'': c'_1 \interr c''\ \wedge\ c'_2 \interr c''\ \Rightarrow\ \exists c: c \interr c'_1\ \wedge\ c \interr c'_2$.}
\label{diamondfig}
\end{figure}
Further, since indirection has the upward diamond property, so does $\indir\cup\interr$.
Now, notice that $\rightarrow^*$ is the transitive closure of $\indir\cup\interr$.
Therefore $\rightarrow^*$ also has the upward diamond property.
\end{proof}

Note that $\interr$ is indistinguishable from reduction in interaction nets, thus Proposition~\ref{plusupdiamond} shows that strong upward confluence of an interaction system in interaction nets implies upward confluence of the corresponding interaction system in the interaction calculus.

\section{Reversible interaction systems}
\label{secris}

Now let us find necessary and sufficient conditions for strong upward confluence.

\begin{definition}
Two active pairs $\alpha(\vec t)=\beta(\vec u)$ and $\alpha'(\vec{t'})=\beta'(\vec{u'})$ \textit{clash} if and only if
$$
\alpha(\vec t)\div\beta(\vec u) = \alpha'(\vec{t'})\div\beta'(\vec{u'}) \neq \varnothing.
$$
\end{definition}

For instance, active pairs $\gamma(t, u) = \gamma(v, w)$ and $\delta(t, u) = \delta(w, v)$ clash in the system of interaction combinators, but $\varepsilon = \varepsilon$ does not clash with any active pair, even with itself.

\begin{proposition}
\label{noclash}
If two different active pairs clash in an interaction system, then the interaction system is not strongly upward confluent.
\end{proposition}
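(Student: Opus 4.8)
The plan is to refute the upward diamond property with an explicit counterexample built from the two clashing pairs. Write $P_1 = \alpha(\vec t) = \beta(\vec u)$ and $P_2 = \alpha'(\vec{t'}) = \beta'(\vec{u'})$ for the clashing active pairs, so that $E := \alpha(\vec t)\div\beta(\vec u) = \alpha'(\vec{t'})\div\beta'(\vec{u'}) \neq \varnothing$. I would take an interface $\vec x$ listing exactly the names occurring once in $P_1$ (by inspecting $E$ these are the same as those occurring once in $P_2$), and set $c_1 = \langle \vec x \mid P_1\rangle$, $c_2 = \langle \vec x \mid P_2\rangle$, $c = \langle \vec x \mid E\rangle$. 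Then $c_1$ and $c_2$ are well-formed, $c_1 \inter c$ and $c_2 \inter c$ follow immediately from the definition of $\div$, and $c_1 \neq c_2$ since $P_1$ and $P_2$ are different active pairs. It remains to show $c_1$ and $c_2$ share no one-step predecessor.

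The structural fact I would exploit is that each of $c_1, c_2$ contains exactly one active pair: the principal ports of $\alpha,\beta$ face each other, while every other agent occurs inside an argument and hence sits at an auxiliary port, so no second pair of principal ports meets. Assuming a common predecessor $c'$ with $c' \to c_1$ and $c' \to c_2$, I would first exclude interaction steps. If $c' \inter c_1$ reduces an active pair $Q$, then because $c_1$ has a single equation, the multiset produced by $Q$ together with the untouched equations has size one; hence either $c' = \langle \vec y \mid Q\rangle$ with $Q$ reducing to $P_1$, or $c' = \langle \vec y \mid Q,\ P_1\rangle$ with $Q$ a nullary active pair reducing to $\varnothing$. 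In both forms every equation of $c'$ has the shape agent $=$ agent, so no indirection applies, and enumerating the one or two possible interactions shows $c'$ reduces only to $c_1$ (and, in the second form, to a configuration of at least two equations), never to the single-equation $c_2$. The symmetric argument excludes $c' \inter c_2$, so both steps must be indirections.

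In the surviving case $c' \indir c_1$ and $c' \indir c_2$, I would note that the removed names $z_1, z_2$ are distinct (removing a fixed name is deterministic, so $z_1 = z_2$ would force $c_1 = c_2$) and that $c'$ has exactly two equations. Splitting each reduction into its indirection equation $E_i$ (carrying $z_i$ on one side) and a residual equation $F_i$ that becomes $P_i$ after substitution, I would observe $F_1 \neq E_2$: substituting for $z_1$ leaves $z_2$ exposed as a bare variable on a side of $E_2$, so it cannot equal the agent-versus-agent pair $P_1$. Hence the residual equations agree and the indirection equations agree, and the single shared indirection equation, having two distinct bare-variable sides, must be $z_1 = z_2$. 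But then $P_1 = F[z_1 := z_2]$ and $P_2 = F[z_2 := z_1]$ coincide up to renaming the merged name, contradicting $c_1 \neq c_2$.

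The main obstacle is exactly this both-indirections case: interaction visibly changes the net, but indirection does not, so the work is to extract from ``two indirections landing in two different single-active-pair configurations'' the rigidity forcing the shared equation to be a pure renaming $z_1 = z_2$. I expect the accompanying bookkeeping — verifying well-formedness of $c_1, c_2$, checking that the once-occurring names of $P_1$ and $P_2$ agree, and treating the minor variant where the substituted occurrence lies in the interface rather than in the residual equation — to be routine, though it requires steady use of the linearity constraint that every name occurs exactly twice.
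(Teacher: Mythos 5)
Your proposal is correct in substance and reaches the contradiction by a genuinely different route than the paper, even though the initial construction is identical: both proofs take $c_1=\langle\vec x\mid P_1\rangle$, $c_2=\langle\vec x\mid P_2\rangle$ with the common one-step reduct $\langle\vec x\mid E\rangle$ and then show no common predecessor $c'$ exists. The difference is in how the four cases (interaction/indirection for each of the two steps $c'\rightarrow c_1$, $c'\rightarrow c_2$) are partitioned. The paper dispatches every case involving an indirection almost for free, via the observation that interaction cannot remove a name, forcing $c_1=c_2$; the real work then happens in the double-interaction case, where $c'$ must contain both $P_1$ and $P_2$ as active pairs and the nonemptiness of the clash $E\neq\varnothing$ yields the contradiction. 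You do the opposite: you dispatch the cases involving an interaction by enumerating the two possible shapes of a one-interaction predecessor of a single-equation configuration (using $E\neq\varnothing$ to rule out landing on $c_2$), and your real work is the double-indirection case, where the rigidity of a two-equation configuration forces the shared indirection equation to be $z_1=z_2$ and hence $c_1$ and $c_2$ to coincide up to renaming of a bound name. Both case analyses are exhaustive, and your renaming argument in the double-indirection case is sound: $z_2$ occurs twice inside the residual equation of $c_1$ and so is renameable. What the paper's route buys is brevity; what yours buys is independence from the (rather tersely justified) name-counting step and an explicit verification that single-active-pair configurations have very few predecessors. The one place I would push back on ``routine'' is the deferred variant where the substituted occurrence of $z_i$ lies in the interface rather than in the residual equation: there the indirection equation must still be $z_1=z_2$ (since the other equation is agent-versus-agent and supports no indirection), but the identification of $c_1$ with $c_2$ then happens through the interface terms rather than through the residual equation, and checking that this still contradicts $c_1\neq c_2$ requires being precise about which names of a configuration may be renamed; it works, but it deserves a few explicit lines rather than a parenthetical.
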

\begin{proof}
Let two different active pairs $\alpha(\vec t)=\beta(\vec u)$ and $\alpha'(\vec{t'})=\beta'(\vec{u'})$ clash in a strongly upward confluent interaction system.
Then there exist the following configurations:
\begin{align*}
c_1 &= \langle \vec f\ |\ \alpha(\vec t)=\beta(\vec u)\rangle, \\
c_2 &= \langle \vec f\ |\ \alpha'(\vec{t'})=\beta'(\vec{u'})\rangle \neq c_1,\ \text{and} \\
c' &= \langle \vec f\ |\ \alpha(\vec t)\div\beta(\vec u)\rangle = \langle \vec f\ |\ \alpha'(\vec{t'})\div\beta'(\vec{u'})\rangle.
\end{align*}
Since $c_1 \inter c'$ and $c_2 \inter c'$, we also have $c \rightarrow c_1$ and $c \rightarrow c_2$ for some $c$.
Interaction cannot remove any name, so if $c \indir c_1$ or $c \indir c_2$, then we would have $c \indir c_1 \Leftrightarrow c \indir c_2$ and $c_1 = c_2$.
Thus $c \inter c_1$ and $c \inter c_2$.
Further, $c_1 \neq c_2$ implies that $c$ has at least two active pairs, namely $\alpha(\vec t)=\beta(\vec u)$ and $\alpha'(\vec{t'})=\beta'(\vec{u'})$.
So $c = \langle \vec f\ |\ \alpha(\vec t)=\beta(\vec u),\ \alpha'(\vec{t'})=\beta'(\vec{u'}),\ \Delta\rangle$ for some $\Delta$.
However, $\alpha(\vec t)\div\beta(\vec u) \rangle = \alpha'(\vec{t'})\div\beta'(\vec{u'}) \neq \varnothing$ results in a contradiction.
\end{proof}

\begin{definition}
An interaction rule $\alpha[\vec v]\bowtie\beta[\vec w]$ is \textit{disconnected} if and only if $\alpha(\vec x)\div\beta(\vec y)$, where $\vec x$ and $\vec y$ are fresh names, can be split into some non-empty multisets $\Gamma$ and $\Delta$ such that no name occurs in both $\Gamma$ and $\Delta$; otherwise $\alpha[\vec v]\bowtie\beta[\vec w]$ is \textit{connected}.
\end{definition}

For instance, annihilation rules $\gamma[x, y] \bowtie \gamma[y, x]$ and $\delta[x, y] \bowtie \delta[x, y]$ are disconnected, whereas duplication $\gamma[\delta(x_1, x_2), \delta(y_1, y_2)] \bowtie \delta[\gamma(x_1, y_1), \gamma(x_2, y_2)]$ is connected.

\begin{proposition}
\label{connected}
If an interaction system is strongly upward confluent, then each of its interaction rules is connected.
\end{proposition}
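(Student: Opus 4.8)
I would prove the contrapositive: if some interaction rule $\alpha[\vec v]\bowtie\beta[\vec w]$ is disconnected, then the system is not strongly upward confluent. By Proposition~\ref{indupdiamond} and Proposition~\ref{indintcom}, every pair of cofinal steps in which at least one is an indirection already has a common one-step predecessor, so the only way the upward diamond property can fail is between two interactions. The plan is therefore to exhibit two distinct configurations $c_1\neq c_2$ with $c_1\inter c$ and $c_2\inter c$ that admit no common one-step predecessor. The whole construction rests on the fact that disconnection lets one recombine the two halves of an interaction across two copies of the same rule.

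Concretely, write $\alpha(\vec x)\div\beta(\vec y)=\Gamma\sqcup\Delta$ with $\Gamma,\Delta$ non-empty and sharing no name. Take two copies $\alpha(\vec s^1)=\beta(\vec t^1)$ and $\alpha(\vec s^2)=\beta(\vec t^2)$ of the active pair, with all port names distinct and fresh, and let $\Gamma^i,\Delta^i$ be the corresponding instances of $\Gamma,\Delta$. Because $\Gamma$ and $\Delta$ share neither a port variable nor an auxiliary variable, the ports occurring in $\Gamma$ and those occurring in $\Delta$ may be assigned independently, so there is a legitimate active pair $P$ whose interaction yields exactly $\Gamma^1\sqcup\Delta^2$; the fresh auxiliaries it produces can be aligned with those of $\Gamma^1$ and of $\Delta^2$ precisely because $\Gamma$ and $\Delta$ have disjoint auxiliaries. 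I then set
$$
c_1=\langle\,\vec f\mid\alpha(\vec s^1)=\beta(\vec t^1),\ \Gamma^2,\ \Delta^2\,\rangle,\quad c_2=\langle\,\vec f\mid P,\ \Gamma^2,\ \Delta^1\,\rangle,\quad c=\langle\,\vec f\mid\Gamma^1,\Delta^1,\Gamma^2,\Delta^2\,\rangle,
$$
where $\vec f$ collects the outer ends of all port variables. Interacting the unique active pair of each configuration gives $c_1\inter c$ and $c_2\inter c$, and since $P$ differs from $\alpha(\vec s^1)=\beta(\vec t^1)$ while the blocks $\Delta^1,\Delta^2$ differ, we have $c_1\neq c_2$.

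It remains to show that $c_1$ and $c_2$ have no common predecessor, and here I would argue as in Proposition~\ref{noclash}. Suppose $c''\rightarrow c_1$ and $c''\rightarrow c_2$. Interaction removes no name and $c_1\neq c_2$, so the same name-counting argument used in Proposition~\ref{noclash} excludes the possibility that either step is an indirection; hence $c''\inter c_1$ and $c''\inter c_2$. Each $c_i$ contains exactly one active pair, so $c''$ must retain that active pair while interacting a \emph{different} one, forcing $c''$ to contain both $\alpha(\vec s^1)=\beta(\vec t^1)$ and $P$. But then interacting the active pair of $c''$ that produces $c_1$ would leave an instance of $\Gamma^1$ or $\Delta^2$ (coming from $P$) among the equations of $c_1$, whose only definition blocks are $\Gamma^2,\Delta^2$; as copy~$1$ and copy~$2$ use disjoint names, these multisets cannot coincide, and symmetrically for $c_2$. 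This contradiction yields the claim.

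The main obstacle is this last step, since strong upward confluence must be refuted against \emph{every} candidate predecessor $c''$, built with any of the system's rules and carrying any number of active pairs, not just the two natural choices above. The disjointness of the names of $\Gamma^1,\Delta^1,\Gamma^2,\Delta^2$ is exactly the tool that makes the required multiset identities between interaction results unsatisfiable, so the case analysis, while tedious, is driven entirely by tracking which copy each name belongs to. A secondary point demanding care is the verification that $P$ is a genuine active pair and that its interaction reproduces $\Gamma^1\sqcup\Delta^2$ on the nose; both are direct consequences of the rule being disconnected, and it is worth checking at the same time that $c_1$ and $c_2$ indeed share the interface $\vec f$, so that the name-counting exclusion of indirection predecessors applies.
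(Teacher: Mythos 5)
Your proposal is correct and follows essentially the same strategy as the paper's proof: exploit disconnectedness to build two distinct active pairs whose reducts recombine the $\Gamma$- and $\Delta$-halves into one common configuration, then rule out a common predecessor by the same name-counting and two-active-pairs argument used in Proposition~\ref{noclash}. The only difference is cosmetic: you duplicate the whole rule instance and form a crossover pair $P$, whereas the paper more economically renames only the $\Delta$-ports (via $[\vec{y'}:=\vec{y''}]$) so that a single $\Gamma$ block is shared between the two reductions.
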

\begin{proof}
Let $\alpha[\vec v]\bowtie\beta[\vec w]$ be disconnected in a strongly upward confluent interaction system.
Take fresh names $\vec x$ and $\vec y$ and split $\alpha(\vec x)\div\beta(\vec y)$ into two non-empty multisets $\Gamma$ and $\Delta$ such that no name occurs in both of them.
Denote the names from $\vec x$ and $\vec y$ that occur in $\Gamma$ as $\vec{x'}$ and the names from $\vec x$ and $\vec y$ that occur in $\Delta$ as $\vec{y'}$.
Now choose
\begin{align*}
c_1 &= \langle\vec{x'}, \vec{y'}, \vec{y''}\ |\ \alpha(\vec x\,[\vec{y'}:=\vec{y''}])=\beta(\vec y\,[\vec{y'}:=\vec{y''}]),\ \Delta\rangle, \\
c_2 &= \langle\vec{x'}, \vec{y'}, \vec{y''}\ |\ \alpha(\vec x)=\beta(\vec y),\ \Delta[\vec{y'}:=\vec{y''}] \rangle \neq c_1,\ \text{and} \\
c' &= \langle\vec{x'}, \vec{y'}, \vec{y''}\ |\ \Gamma,\ \Delta,\ \Delta[\vec{y'}:=\vec{y''}]\rangle.
\end{align*}
Notice $c_1 \inter c'$ and $c_2 \inter c'$.
Similarly to Proposition~\ref{noclash}, we have $c \inter c_1$ and $c \inter c_2$ for some $c$.
Since $c_1 \neq c_2$, configuration $c$ has at least two active pairs, namely $\alpha(\vec x)=\beta(\vec y)$ and $\alpha(\vec x\,[\vec{y'}:=\vec{y''}])=\beta(\vec y\,[\vec{y'}:=\vec{y''}])$.
However, $\Gamma \neq \varnothing$ results in a contradiction.
\end{proof}

Propositions \ref{noclash} and \ref{connected} together motivate the following definition that combines both found necessary conditions for strong upward confluence of an interaction system.

\begin{definition}
\label{revdef}
An interaction system is \textit{reversible} if and only if no two different active pairs clash and each rule is connected; otherwise the interaction system is \textit{irreversible}.
\end{definition}

In interaction nets, a reversible interaction system is such that two active pairs never reduce to overlapping nets and any net can be the result of at most one active pair.

\begin{proposition}
\label{unambsc}
Any reversible interaction system is strongly upward confluent.
\end{proposition}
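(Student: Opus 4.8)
The plan is to split on the kinds of the two steps and to isolate the only genuinely new case. Suppose $c_1 \rightarrow c$ and $c_2 \rightarrow c$. If both steps are indirections, Proposition~\ref{indupdiamond} already produces the required upper bound $c'$; if exactly one of them is an indirection, Proposition~\ref{indintcom}, read in the suitable orientation, does. So everything reduces to the case $c_1 \inter c$ and $c_2 \inter c$, where $c_1$ reduces its active pair $\alpha(\vec t)=\beta(\vec u)$ and $c_2$ reduces $\alpha'(\vec{t'})=\beta'(\vec{u'})$. Writing $c = \langle \vec f \mid \Theta \rangle$, each step displays the reduct of its active pair inside $\Theta$, so that $\Theta = D_1 \uplus R_1 = D_2 \uplus R_2$ with $D_1 = \alpha(\vec t)\div\beta(\vec u)$, $D_2 = \alpha'(\vec{t'})\div\beta'(\vec{u'})$, and $c_i = \langle \vec f \mid (\text{its active pair}),\ R_i \rangle$.

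The decisive question is how $D_1$ and $D_2$ sit inside $\Theta$. If they occupy disjoint positions, that is $D_1 \uplus D_2 \subseteq \Theta$, then $D_2 \subseteq R_1$ and $D_1 \subseteq R_2$, so both active pairs can be reinstated at once. Taking $c' = \langle \vec f \mid \alpha(\vec t)=\beta(\vec u),\ \alpha'(\vec{t'})=\beta'(\vec{u'}),\ \Xi \rangle$, where $\Xi$ is the remainder of $\Theta$ after deleting $D_1$ and $D_2$, reducing the second active pair yields $c_1$ and reducing the first yields $c_2$; hence $c' \inter c_1$ and $c' \inter c_2$. This is the easy half and needs no hypothesis on the system.

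All the weight falls on the overlapping case, in which $D_1$ and $D_2$ share an equation yet are not equal. I would prove that this cannot happen unless $D_1 = D_2$ as sub-multisets of $\Theta$, using both halves of reversibility. A name created by an interaction occurs exactly twice and only within its own reduct; connectedness of $\alpha[\vec v]\bowtie\beta[\vec w]$ should then let one drag the shared region along these internal names until all of $D_1$ is shared, and symmetrically for $D_2$, forcing $D_1 = D_2$. Once the reducts coincide we get $R_1 = R_2$, so $c_1$ and $c_2$ differ only in their single active pair; as the common reduct is non-empty, the two active pairs are either equal, whence $c_1 = c_2$, or they clash, which reversibility forbids by Proposition~\ref{noclash}. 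The leftover possibility $c_1 = c_2$ is closed exactly as in the last case of Proposition~\ref{indupdiamond}: splitting any one equation of $c_1$ with a fresh name gives a $c'$ with $c' \indir c_1 = c_2$.

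The step I expect to be genuinely hard is the propagation \emph{shared equation $\Rightarrow$ $D_1 = D_2$}, because plain occurrence counting does not settle it: a name internal to $D_1$ may legitimately reappear as a boundary name of $D_2$, so partial interleavings are not excluded by linearity alone. The right way to rule them out, I think, is contrapositive and mirrors Section~\ref{secris}: from a genuine partial overlap one should be able to reconstruct a common predecessor only by reproducing one of the two pathological shapes already built in the proofs of Propositions~\ref{noclash} and~\ref{connected} — two distinct active pairs with identical reduct, or a rule whose reduct splits along disjoint names. Showing that connectedness and the absence of clashes jointly block every such interleaving, rather than handling the disjoint case, is where the argument must concentrate its effort.
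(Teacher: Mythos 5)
Your proposal is correct and follows essentially the same route as the paper's proof: reduce to the double-interaction case via Propositions~\ref{indupdiamond} and~\ref{indintcom}, reinstate both active pairs when the two reducts occupy disjoint positions, rule out a partial overlap by showing it would witness a disconnected rule and a full overlap of distinct pairs by showing it would be a clash, and close the $c_1 = c_2$ case with a fresh-name split. The step you flag as hard --- that a proper nonempty overlap really does split the reduct along disjoint names --- is exactly the point the paper also passes over tersely, so your account matches it in both structure and level of detail.
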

\begin{proof}
Let $c_1 \rightarrow c$ and $c_2 \rightarrow c$.
If $c_1 \indir c$ or $c_2 \indir c$, use Proposition~\ref{indupdiamond} or Proposition~\ref{indintcom}.
Otherwise, assume that $c_1 \inter c$ due to $\alpha[\vec v]\bowtie\beta[\vec w]$ and $c_2 \inter c$ due to (possibly the same rule) $\alpha'[\vec{v'}]\bowtie\beta'[\vec{w'}]$.
If $c_1 \neq c_2$, then $c_1$ and $c_2$ can be written as $\langle \vec f\ |\ \alpha(\vec t) = \beta(\vec u),\ \Gamma_1,\ \Delta \rangle$ and $\langle \vec f\ |\ \alpha'(\vec{t'}) = \beta'(\vec{u'}),\ \Gamma_2,\ \Delta \rangle$, respectively, so that $\Gamma_1 \cap \Gamma_2 = \varnothing$.
Therefore we have
$$
c = \langle \vec f\ |\ \alpha(\vec t) \div \beta(\vec u),\ \Gamma_1,\ \Delta \rangle = \langle \vec f\ |\ \alpha'(\vec{t'}) \div \beta'(\vec{u'}),\ \Gamma_2,\ \Delta \rangle.
$$
Notice $\Gamma_1 \subseteq \alpha'(\vec{t'}) \div \beta'(\vec{u'})$ and $\Gamma_2 \subseteq \alpha(\vec t) \div \beta(\vec u)$.
Now the following cases are exhaustive:
\begin{enumerate}
\item $\Gamma_1 = \alpha'(\vec{t'}) \div \beta'(\vec{u'})$ or $\Gamma_2 = \alpha(\vec t) \div \beta(\vec u)$. \\
Notice $\Gamma_1 = \alpha'(\vec{t'}) \div \beta'(\vec{u'}) \Leftrightarrow \Gamma_2 = \alpha(\vec t) \div \beta(\vec u)$. \\
Choose $c' = \langle \vec f\ |\ \alpha(\vec t) = \beta(\vec u),\ \alpha'(\vec{t'}) = \beta'(\vec{u'}),\ \Delta\rangle$.

\item $\Gamma_1 \subsetneq \alpha'(\vec{t'}) \div \beta'(\vec{u'})$ and $\Gamma_2 \subsetneq \alpha(\vec t) \div \beta(\vec u)$. \\
Notice that $\alpha'(\vec{t'}) \div \beta'(\vec{u'}) \neq \varnothing$ and $\alpha(\vec t) \div \beta(\vec u) \neq \varnothing$. \\
If $\Gamma_1 \neq \varnothing$ or $\Gamma_2 \neq \varnothing$, then either of $\alpha'[\vec{v'}]\bowtie\beta'[\vec{w'}]$ and $\alpha[\vec v]\bowtie\beta[\vec w]$ is disconnected. \\
Otherwise $\Gamma_1 = \Gamma_2 = \varnothing$, which makes $\alpha(\vec t)=\beta(\vec u)$ and $\alpha'(\vec{t'})=\beta'(\vec{u'})$ clash.
\end{enumerate}
If $c_1 = c_2 = \langle \vec f\ |\ t = u,\ \Delta \rangle$, choose $c' = \langle \vec f\ |\ t = x,\ x = u,\ \Delta \rangle$ with a fresh name $x$.
\end{proof}

Propositions \ref{noclash}, \ref{connected}, and \ref{unambsc} together show that reversibility in the sense of Definition~\ref{revdef} is a necessary and sufficient condition for strong upward confluence of an interaction system.

\section{Towards the reversible combinators}
\label{revcomb}

Let us have a closer look at the properties of reversible interaction systems.

\begin{definition}
The \textit{arity} of an interaction rule $\alpha[\vec v]\bowtie\beta[\vec w]$ is $\arity(\alpha)+\arity(\beta)$.
\end{definition}

\begin{proposition}
\label{diffarity}
No two different rules have the same positive arity in a reversible system.
\end{proposition}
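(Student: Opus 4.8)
The plan is to argue the contrapositive: if two different rules share the same positive arity, then the system is irreversible, and in fact already fails the no-clash half of Definition~\ref{revdef}. So it suffices to manufacture, out of two distinct same-arity rules, two \emph{different} active pairs that clash; connectivity will not enter the argument at all.

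Suppose then that $\alpha[\vec v]\bowtie\beta[\vec w]$ and $\alpha'[\vec{v'}]\bowtie\beta'[\vec{w'}]$ are distinct rules with $\arity(\alpha)+\arity(\beta)=\arity(\alpha')+\arity(\beta')=N>0$. By the definition of $\div$, the multiset $\alpha(\vec t)\div\beta(\vec u)$ has exactly one equation per auxiliary argument, hence exactly $N$ of them, and similarly for the primed rule. I would enumerate the right-hand side terms of the first rule as $V_1,\dots,V_N$ (namely $v_1,\dots,v_{\arity(\alpha)},w_1,\dots,w_{\arity(\beta)}$) and those of the second as $W_1,\dots,W_N$, renaming so that the internal variables of the two rules are disjoint.

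The key step is to feed each rule's right-hand side terms in as the \emph{arguments} of the other rule's active pair. Since the arguments of an active pair may be arbitrary terms, and each of the $N$ argument slots occurs in exactly one equation of the reduct, I can set the $k$-th argument of $\alpha(\vec t)=\beta(\vec u)$ to $W_k$ and the $k$-th argument of $\alpha'(\vec{t'})=\beta'(\vec{u'})$ to $V_k$. The first active pair then reduces to $\{W_1=V_1,\dots,W_N=V_N\}$ and the second to $\{V_1=W_1,\dots,V_N=W_N\}$. As equations are symmetric, these are one and the same multiset, which is non-empty because $N>0$; hence the two active pairs clash. They are moreover different, since distinct interaction rules are distinguished by their unordered pair of principal agents, so $\{\alpha,\beta\}\neq\{\alpha',\beta'\}$ and the two active pairs cannot coincide. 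This contradicts reversibility.

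The one point demanding care is that the two reducts must be \emph{literally} equal as multisets of equations, not merely isomorphic up to renaming; this is exactly why I take the internal variables of the two rules disjoint and reuse the same terms $V_k,W_k$ on both sides, after which symmetry of equations finishes the job. It is also worth noting where positivity is used: when $N=0$ both reducts are empty, and the clash condition explicitly excludes $\varnothing$, which correctly allows distinct nullary rules to coexist in a reversible system.
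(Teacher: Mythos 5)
Your proof is correct and follows essentially the same route as the paper: both cross-feed the right-hand-side terms of each rule as the arguments of the other rule's active pair, observe that the two reducts coincide as multisets by symmetry of equations, and conclude a non-empty clash between two different active pairs, contradicting reversibility. Your explicit remarks on variable disjointness and on why positivity of the arity is needed are sound elaborations of points the paper leaves implicit.
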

\begin{proof}
Assume that a reversible interaction system has the following two rules:
$$
\alpha_1[v_1,\dots,v_m]\bowtie\beta_1[v_{m+1},\dots,v_{m+n}]\ \text{and}\ \alpha_2[w_1,\dots,w_{m'}]\bowtie\beta_2[w_{m'+1},\dots,w_{m'+n'}],
$$
where $m + n = m' + n' \ge 1$.
Then we have a clash between two different active pairs:
$$
\alpha_1(w_1,\dots,w_m)\div\beta_1(w_{m+1},\dots,w_{m+n})=\alpha_2(v_1,\dots,v_{m'})\div\beta_2(v_{m'+1},\dots,v_{m'+n'})\neq\varnothing.
$$
The latter contradicts reversibility of the interaction system.
\end{proof}

\begin{definition}
\label{revrule}
An interaction rule $\alpha[\vec v]\bowtie\beta[\vec w]$ is \textit{reversible} if and only if the rule is connected and no two different active pairs $\alpha(\vec t)=\beta(\vec u)$ and $\alpha(\vec{t'})=\beta(\vec{u'})$ clash.
\end{definition}

\begin{proposition}
An interaction system is reversible if and only if all its rules are reversible and no two different rules in the interaction system have the same positive arity.
\end{proposition}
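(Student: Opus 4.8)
The plan is to prove the two implications of the biconditional separately, obtaining the forward direction almost immediately from the definitions together with Proposition~\ref{diffarity}, and reserving the real work for the backward direction.

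For the forward direction, I would suppose the interaction system is reversible in the sense of Definition~\ref{revdef}, so that each rule is connected and no two different active pairs clash. The first property already gives connectedness of every individual rule. For the second, I would specialize ``no two different active pairs clash'' to active pairs built from a single rule $\alpha[\vec v]\bowtie\beta[\vec w]$, namely $\alpha(\vec t)=\beta(\vec u)$ and $\alpha(\vec{t'})=\beta(\vec{u'})$; their non-clashing is exactly the remaining requirement of Definition~\ref{revrule}, so each rule is reversible. The condition that no two different rules share the same positive arity is then supplied verbatim by Proposition~\ref{diffarity}.

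For the backward direction, I would assume every rule is reversible and that distinct rules have distinct positive arities, and aim to verify Definition~\ref{revdef}. Connectedness of every rule is immediate, being part of each rule's reversibility, so the substantive task is to exclude a clash between two arbitrary active pairs $\alpha(\vec t)=\beta(\vec u)$ and $\alpha'(\vec{t'})=\beta'(\vec{u'})$. I would split on whether the two active pairs are formed from the same rule, i.e. carry the same pair of symbols. In the same-rule case, reversibility of that rule forbids the clash directly via Definition~\ref{revrule}.

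The main obstacle is the case of two active pairs formed from \emph{different} rules. The key observation is that the multiset $\alpha(\vec t)\div\beta(\vec u)$ has cardinality $\arity(\alpha)+\arity(\beta)$, which is precisely the arity of its rule. Hence if the two active pairs clash, the defining multiset equality $\alpha(\vec t)\div\beta(\vec u)=\alpha'(\vec{t'})\div\beta'(\vec{u'})$ forces the two multisets to have equal cardinality, so the two rules have equal arity; and since a clash requires this common multiset to be non-empty, that arity is positive. This contradicts the assumption that no two different rules have the same positive arity. With both cases excluded, no two different active pairs clash, and together with connectedness this yields reversibility of the interaction system, completing the proof.
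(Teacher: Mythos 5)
Your proof is correct and follows essentially the same route as the paper: the forward direction is read off from Definitions~\ref{revdef} and~\ref{revrule} together with Proposition~\ref{diffarity}, and the backward direction hinges on exactly the observation the paper invokes, namely that a clash forces the two active pairs (hence their rules) to have the same positive arity since the clashing multiset has cardinality equal to the rule's arity and must be non-empty. You merely spell out the details the paper leaves implicit.
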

\begin{proof}
($\Rightarrow$) follows from Definition~\ref{revdef}, Definition~\ref{revrule}, and Proposition~\ref{diffarity}.
Conversely, ($\Leftarrow$) is due to Definition~\ref{revdef}, Definition~\ref{revrule}, and the fact that any clash requires the same arity.
\end{proof}

One may wonder if there exists a reversible equivalent of interaction combinators~\cite{comb}.
In other words, is there a reversible interaction system into which any other reversible interaction system can be translated in the sense of \cite{comb}? Assuming that such a system exists, we will refer to it as the \textit{reversible combinators}.

\begin{definition}
An interaction system is \textit{trivial} if and only if $\forall \alpha \in \Sigma: \arity(\alpha) = 0$.
\end{definition}

It is easy to see that the system of reversible combinators cannot be trivial.
Similarly to the system of interaction combinators, one may look for a complete interaction system that has an interaction rule for each pair $(\alpha, \beta) \in \Sigma^2$ with a finite signature $\Sigma$.

\begin{proposition}
If an interaction system is reversible, complete, and non-trivial, then
$$
\forall \alpha, \beta \in \Sigma: \arity(\alpha) = \arity(\beta) \Rightarrow \alpha = \beta.
$$
\end{proposition}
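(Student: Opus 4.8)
The plan is to argue by contradiction: suppose there exist two distinct symbols $\alpha \neq \beta$ in $\Sigma$ with $\arity(\alpha) = \arity(\beta) = k$, and derive a contradiction from Proposition~\ref{diffarity}, which forbids two different rules from sharing the same positive arity. The key leverage is completeness, which supplies a rule for every pair in $\Sigma^2$; this lets me freely assemble several rules whose arities I can compute directly from the arities of their symbols, and then exhibit two \emph{distinct} such rules with equal positive arity.

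First I would split on whether $k$ is positive or zero. If $k > 0$, I would compare the rule for the pair $(\alpha, \alpha)$, whose arity is $\arity(\alpha) + \arity(\alpha) = 2k$, with the rule for the pair $(\alpha, \beta)$, whose arity is $\arity(\alpha) + \arity(\beta) = k + k = 2k$. Since $\alpha \neq \beta$, these act on different unordered pairs of symbols and so are genuinely different rules, yet both carry the positive arity $2k$, contradicting Proposition~\ref{diffarity}.

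The zero-arity case is where non-triviality must enter, and recognizing exactly why it is indispensable here is the main subtlety of the argument. If $k = 0$, every rule built from $\alpha$ and $\beta$ alone has arity $0$, so Proposition~\ref{diffarity} says nothing about them. Instead I would invoke non-triviality to obtain a symbol $\gamma$ with $\arity(\gamma) = n > 0$; note that $\gamma \notin \{\alpha, \beta\}$ automatically, since its arity differs from theirs. Then the rule for $(\alpha, \gamma)$ has arity $0 + n = n$ and the rule for $(\beta, \gamma)$ has arity $0 + n = n$; as $\alpha \neq \beta$ these are again different rules sharing the same positive arity $n$, contradicting Proposition~\ref{diffarity}.

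Since both cases yield a contradiction, no two distinct symbols can share an arity, which is precisely the claim. The only point I would check carefully is that in each case the two compared rules really are distinct as rules, i.e. that their underlying unordered symbol-pairs differ; this follows from $\alpha \neq \beta$ in the first case and from $\alpha \neq \beta$ together with $\gamma \notin \{\alpha, \beta\}$ in the second.
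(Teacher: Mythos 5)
Your proof is correct and follows essentially the same route as the paper's: for positive shared arity $k$ it compares the rules for $(\alpha,\alpha)$ and $(\alpha,\beta)$ (the paper's $(\beta_1,\beta_1)$ and $(\beta_1,\beta_2)$ under renaming), and for $k=0$ it pairs both symbols with a symbol of positive arity supplied by non-triviality, in each case contradicting Proposition~\ref{diffarity}. Your explicit remark that the auxiliary symbol cannot coincide with $\alpha$ or $\beta$ because its arity differs is exactly the observation the paper makes with ``$\beta_1 \neq \alpha \neq \beta_2$''.
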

\begin{proof}
As the interaction system is non-trivial, we have $\arity(\alpha) \neq 0$ for some $\alpha \in \Sigma$.
Assume $\beta_1 \neq \beta_2$ and $\arity(\beta_1) = \arity(\beta_2) = n$.
If $n = 0$, then $\beta_1 \neq \alpha \neq \beta_2$ and the rules for $(\alpha, \beta_1)$ and $(\alpha, \beta_2)$ have the same positive arity.
Otherwise, $n > 0$ and the rules for $(\beta_1, \beta_2)$ and $(\beta_1, \beta_1)$ have the same positive arity.
Both contradict Proposition~\ref{diffarity}.
\end{proof}

\section{Upward confluence without reversibility}
\label{seclin}

Although only a reversible interaction system can be \textit{strongly} upward confluent, there still exist irreversible systems that are upward confluent.
In particular, consider the interaction system of the linear $\lambda$-calculus whose only rule is $@[x, y]\bowtie\lambda[x, y]$.
Not only the interaction rule is disconnected, but we also have a clash:
$$
@(t, u)\div\lambda(v, w) = @(v, u)\div\lambda(t, w) \neq \varnothing.
$$

\begin{proposition}
The interaction system of the linear $\lambda$-calculus is upward confluent. 
\end{proposition}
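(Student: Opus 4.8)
The plan is to verify the sufficient condition~(\ref{plusdiamond}) of Proposition~\ref{plusupdiamond} and then invoke that proposition. Note that this is consistent with the clash: the clash obstructs only the single-step upward diamond of Definition~\ref{srsys}, whereas~(\ref{plusdiamond}) asks for ancestors along $\interr$, which permits trailing indirections. By Proposition~\ref{plusupdiamond} it therefore suffices to establish~(\ref{plusdiamond}); every diagram involving an indirection is already absorbed by Propositions~\ref{indupdiamond} and~\ref{indintcom} inside that proposition. The entire task thus reduces to the following: given two interactions $c_1 \inter c$ and $c_2 \inter c$ by the only rule $@[x,y]\bowtie\lambda[x,y]$, produce a configuration $c'$ with $c' \interr c_1$ and $c' \interr c_2$.

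First I would reformulate the data geometrically. Contracting an active pair $@(t_1,t_2)=\lambda(u_1,u_2)$ deletes the two agents and, after the trailing indirections, replaces them by the two wires $\{t_1,u_1\}$ and $\{t_2,u_2\}$. Hence $c_1$ and $c_2$ are two expansions of the common reduct $c$: each arises from $c$ by choosing two of its wires and re-inserting an $@$--$\lambda$ active pair in their place. The proof then splits according to how the two chosen wire-pairs meet, i.e.\ whether they share $0$, $1$, or $2$ wires.

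If the two wire-pairs are disjoint, the re-inserted redexes are independent, so I take $c'$ to be $c$ with both active pairs re-inserted at once; contracting the second yields $c_1$ and contracting the first yields $c_2$, each in one interaction. The hard case is when the two expansions share a wire --- in particular the full overlap with swapped ports, which is exactly the clash responsible for the failure of strong upward confluence. Here I keep all four agents $@_1,\lambda_1,@_2,\lambda_2$ in $c'$ and cross-wire them: every external port on which $c_1$ and $c_2$ disagree is routed, in $c'$, through an auxiliary port of whichever agent the \emph{opposite} reduction consumes. Contracting $(@_2,\lambda_2)$ then re-routes precisely those ports so that $(@_1,\lambda_1)$ is left forming the active pair of $c_1$ with no stray equations, and symmetrically contracting $(@_1,\lambda_1)$ leaves $c_2$; each is a single interaction followed by indirections, so indeed $c' \interr c_1$ and $c' \interr c_2$. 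The degenerate case $c_1=c_2$ is handled by the same device, inserting an auxiliary annihilating pair on two legs of one agent.

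The main obstacle is this overlapping case: one must write the cross-wiring down explicitly and check that annihilating one pair reproduces the other configuration on the nose, with the correct port parities and without leftover wires. The point where linearity is indispensable --- and where Plotkin's $\lambda$-calculus counterexample fails to transfer --- is that every wire and every port occurs exactly once, so the re-routing prescribed above is forced and globally consistent; there is no duplicated argument to be shared between the two reducts, which is exactly what would otherwise block the construction of a common ancestor.
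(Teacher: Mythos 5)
Your proposal is correct and takes essentially the same route as the paper: reduce to condition~(\ref{plusdiamond}) and invoke Proposition~\ref{plusupdiamond}, classify the two redexes by how many of the two wires created by each contraction they share (the paper's cases $\Gamma_1 = \alpha'(\vec{t'})\div\beta'(\vec{u'})$, $\varnothing \neq \Gamma_1 \subsetneq \alpha'(\vec{t'})\div\beta'(\vec{u'})$, and $\Gamma_1 = \varnothing$), and in the overlapping cases build a common ancestor containing all four agents chained through fresh names so that contracting either pair yields the other configuration. The explicit cross-wirings you defer as ``the main obstacle'' are exactly the configurations the paper writes out and depicts in Figures~\ref{disfig} and~\ref{ambfig}.
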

\begin{proof}
Identical to Proposition~\ref{unambsc} except when $\Gamma_1 \subsetneq @(\vec{t'}) \div \lambda(\vec{u'})$ and $\Gamma_2 \subsetneq @(\vec t) \div \lambda(\vec u)$.
Let $\alpha_1, \alpha_2, \beta_1, \beta_2 \in \{@, \lambda\}$, $\alpha_1 \neq \beta_1$, and $\alpha_2 \neq \beta_2$.
Now we have the following two cases:
\begin{enumerate}
\item $\Gamma_1 \neq \varnothing$ or $\Gamma_2 \neq \varnothing$.
Notice $\Gamma_1 \neq \varnothing \Leftrightarrow \Gamma_2 \neq \varnothing$.
Then for some $t_i$ and $u_i$ we have \\
$c = \langle \vec f\ |\ t_1 = x,\ x = u_1,\ t_2 = y,\ y = u_2,\ t_3 = z,\ z = u_3,\ \Delta\rangle$. \\
If $c_1 = \langle \vec f\ |\ \alpha_1(t_1, t_2) = \beta_1(u_1, u_2),\ t_3 = z,\ z = u_3,\ \Delta\rangle$ and \\
\phantom{If} $c_2 = \langle \vec f\ |\ \alpha_2(t_2, t_3) = \beta_2(u_2, u_3),\ t_1 = z,\ z = u_1,\ \Delta\rangle$, then choose \\
\phantom{If} $c' = \langle \vec f\ |\ \alpha_1(t_1, t_2) = \beta_1(u_1, x),\ \alpha_2(x, t_3) = \beta_2(u_2,u_3),\ \Delta\rangle$ shown in Figure~\ref{disfig}. \\
Similarly for the other possible $c_1$ and $c_2$.

\item $\Gamma_1 = \Gamma_2 = \varnothing$.
Then for some $t_i$ and $u_i$ we have \\
$c = \langle \vec f\ |\ t_1 = x,\ x = u_1,\ t_2 = y,\ y = u_2,\ \Delta\rangle$. \\
If $c_1 = \langle \vec f\ |\ \alpha_1(t_1, t_2) = \beta_1(u_1, u_2),\ \Delta\rangle$ and \\
\phantom{If} $c_2 = \langle \vec f\ |\ \alpha_2(t_1, t_2) = \beta_2(u_1, u_2),\ \Delta\rangle$, then choose \\
\phantom{If} $c' = \langle \vec f\ |\ \alpha_1(t_1, t_2) = \beta_1(x, y),\ \alpha_2(x, y) = \beta_2(u_1, u_2),\ \Delta\rangle$ shown in Figure~\ref{ambfig}. \\
Similarly for the other possible $c_1$ and $c_2$.
\end{enumerate}
For each chosen configuration $c'$, notice $c' \interr c_1$ and $c' \interr c_2$, then use Proposition~\ref{plusupdiamond}.
\end{proof}

\begin{figure}
\begin{minipage}{0.5\textwidth}
$$
\begin{tikzpicture}[baseline=(a1.base)]
\matrix[row sep=1.2em]{
& \node[circle, draw] (a1) {$\alpha_1$}; &
& \node[circle, draw] (a2) {$\alpha_2$}; &
& &
& \node[circle, draw] (b1) {$\beta_1$}; &
& \node[circle, draw] (b2) {$\beta_2$}; & \\
\node (t1) {$t_1$}; & &
\node (t2) {$t_2$}; & &
\node (t3) {$t_3$}; &
\node (t4) {}; &
\node (u1) {$u_1$}; & &
\node (u2) {$u_2$}; & &
\node (u3) {$u_3$}; \\
};
\draw[>-<, >=latex] (a1) -- ++(0,0.9) -| (b1);
\draw[>-<, >=latex] (a2) -- ++(0,1.2) -| (b2);
\draw (t2) -- (a1) -- (t1);
\draw (u2) -- (b2) -- (u3);
\draw (t3) -- (a2);
\draw (u1) -- (b1);
\draw (a2) -- ++(0,-0.7) -| (b1);
\end{tikzpicture}
$$
\caption{$\Gamma_1 \neq \varnothing$ and $\Gamma_2 \neq \varnothing$.}
\label{disfig}
\end{minipage}%
\begin{minipage}{0.5\textwidth}
$$
\begin{tikzpicture}[baseline=(a1.base)]
\matrix[row sep=1.2em]{
& \node[circle, draw] (a1) {$\alpha_1$}; &
& \node[circle, draw] (a2) {$\alpha_2$}; &
& &
& \node[circle, draw] (b1) {$\beta_1$}; &
& \node[circle, draw] (b2) {$\beta_2$}; & \\
\node (t1) {$t_1$}; & &
\node (t2) {$t_2$}; & &
\node (t3) {}; &
\node (t4) {}; &
\node (u1) {}; & &
\node (u2) {$u_1$}; & &
\node (u3) {$u_2$}; \\
};
\draw[>-<, >=latex] (a1) -- ++(0,0.9) -| (b1);
\draw[>-<, >=latex] (a2) -- ++(0,1.2) -| (b2);
\draw (t2) -- (a1) -- (t1);
\draw (u2) -- (b2) -- (u3);
\draw (a2) -- ++(-0.3,-0.7) -| (b1);
\draw (b1) -- ++(0.4,-0.9) -| (a2);
\end{tikzpicture}
$$
\caption{$\Gamma_1 = \Gamma_2 = \varnothing$.}
\label{ambfig}
\end{minipage}
\end{figure}

\section{Conclusion}

The current paper has touched upon the aspect of upward confluence in the context of interaction nets.
In Section~\ref{secris}, we introduced the notion of a reversible interaction system and showed that reversibility is a necessary and sufficient condition for strong upward confluence.
Still, upward confluence is proved to be possible without reversibility using the linear $\lambda$-calculus as a rather curious example in Section~\ref{seclin}.

We believe that reversible interaction systems are worth further study not only because of their strong upward confluence but also with respect to reversible computation in a local sense that disregards the order in which independent computations take place.

One possible way to further study reversible systems is to find at least one universal reversible interaction system.
As discussed in Section~\ref{revcomb}, such a system should be complete and have a finite signature with unique arities for all its agent types.

\end{document}